\newif\iflatexml\latexmlfalse
\numberwithin{equation}{section}
\newcommand{\A}{\mathcal{A} }
\newcommand{\C}{\mathcal{C} }
\newcommand{\Si}{\mathcal{S} }
\newcommand{\T}{\mathsf{True} }
\newcommand{\F}{\mathsf{False} }
\newtheorem{theorem}{Theorem}[section]
\newtheorem{definition}{Definition}[section]
\newtheorem{corollary}{Corollary}[section]
\newtheorem{remark}{Remark}[section]
\begin{document}

\title{Several Proofs of Security for a Tokenization Algorithm}

%\author{User1 %\\ email \href{mailto:me@somewhere.com}{me@somewhere.com} 
%\and User2 }%\\ email \href{mailto:someone@somewhere.com}{someone@somewhere.com} }

\author{Riccardo Longo\footnote{\textit{riccardo.longo@unitn.it}} , Riccardo Aragona\footnote{\textit{riccardo.aragona@unitn.it}} , and Massimiliano Sala\footnote{\textit{maxsalacodes@gmail.com}}}
\affil{University of Trento}
  
%\author{Massimiliano Sala}
%\affil{University of Trento}
%  
%\author{Riccardo Aragona}
%\affil{\University of Trento}

\date{\today}

\maketitle

\begin{abstract}
In this paper we propose a tokenization algorithm of  Reversible Hybrid type, as defined in PCI DSS guidelines for designing a tokenization solution, based on a block cipher with a secret key and (possibly public) additional input. We provide some formal proofs of security for it, which  imply our algorithm satisfies the most significant security requirements described in PCI DSS tokenization guidelines. Finally, we give an instantiation with concrete cryptographic primitives and fixed length of the PAN, and we analyze its efficiency and security.%
\end{abstract}
\medskip
\small{\textbf{Keywords:} Tokenization, Block Ciphers, Provable Security, IND CPA Security}\\
\medskip
\small{\textbf{MSC 2010:} 94A60}

\section{Introduction}
In recent years, credit cards have become one of the most popular payment instruments.
Their growing popularity  has brought
many companies to store the card information of its customers to make
simpler subsequent payments. This need is shared by many other actors in the payment process.
On the other hand, credit card data are very sensitive information, and then the theft of such data is considered a serious threat.

Any company that stores credit card data aims to achieve the \emph{Payment Card Industry Security Standard
Council} (\emph{PCI SSC}) compliance. The PCI SSC is an organization, founded by the largest payment card networks, which has developed several standards and recommendations. One of these is called the \emph{PCI Data Security Standard} (\emph{PCI DSS} \cite{pcidss}) and its goal is to guarantee the security of credit card data. PCI DSS requires that companies that handle payment cards protect the data of the cardholder when these are stored, transmitted or processed.

These stringent requirements  led  to consider a new method for storage and transmission of the card information: instead of protecting the actual card data, it is easier to remove them (when their storing is not requested) and replace them with another
value, called  \emph{token}.
Tokens are alpha-numeric strings representing the \emph{PAN} (\emph{Primary Account Number}) of a payment card, and that may have a format similar to it.
In any case, from a token it must be infeasible (without additional information) to recover the PAN from which it was generated.
This process is called \emph{tokenization}.
In \cite{Santiago2016}, the authors present an interesting formal cryptographic study of tokenization systems and their security.

In recent years, PCI SSC has drafted some guidelines to design a tokenization solution \cite{pcitoken11,pcitoken15}. In \cite{pcitoken15}, the following five types of tokens are described: \emph{Authenticatable Irreversible Tokens}, \emph{Non-Authenticatable Irreversible Tokens}, \emph{Reversible Cryptographic Tokens}, \emph{Reversible Non-Cryptographic Tokens} and \emph{Reversible Hybrid Tokens}.\\
In particular, a reversible tokenization algorithm , i.e., providing the possibility for entities using or producing tokens to obtain the original PAN from the token, can be designed in three ways:  
\begin{itemize}
\item \emph{Reversible Cryptographic}, if it  generates tokens from PANs using strong cryptography. In particular, a mathematical relationship between PAN and corresponding token exists. In this case, the PAN is never stored; only the cryptographic key is (securely) stored.
\item \emph{Reversible Non-Cryptographic}, if obtaining a PAN from its token is only by a data look-up in a dedicated
server (called a Card Data Vault). In this case, the token has no mathematical relationship with its associated PAN and the only thing to be kept secret is the actual relationship between the PAN and its token (e.g., a look-up table in the Card Data Vault).
\item A reversible tokenization algorithm is called \emph{Hybrid} if it contains some features of both Reversible Cryptographic tokens and Reversible Non-Cryptographic tokens. A typical situation of this type is when, although  there is a mathematical relationship 
between a token and its associated PAN, a data look-up table must be used to retrieve the PAN from the token.
\end{itemize}

In this paper we propose a tokenization algorithm of the \emph{Reversible Hybrid} type, based on a block cipher with a secret
key and (possibly public) additional input.
We provide some formal proofs of security for it. To fully appreciate our design and the proposed proofs
it is necessary to analyze the PCI requirements in more detail.
The remainder of the paper is thus as follows:
\begin{itemize}
\item In Section \ref{requirements}, we analyze some  PCI requirements for a tokenization algorithm 
      (\cite{pcitoken11,pcitoken15}).
\item In Section \ref{alg}, we describe our tokenization algorithm;
\item In Section \ref{proof}, we prove the security of the algorithm defined in Section \ref{alg}  in a very general scenario, which would imply our       algorithm satisfies most requirements present in Section \ref{requirements}. More precisely, we 
      present a security notion, Indistinguishability under a Chosen-Plaintext Attack (IND-CPA),
      for a tokenization algorithm of our type, and we prove it under the assumption
      that our core cryptographic algorithm satisfies a standard IND-CPA. We also provide a separate proof for a special requirement.
\item In Section \ref{sec:con}, we give an instantiation of  our algorithm, considering concrete cryptographic primitives and fixing PAN length, and analyzing the security and efficiency in this real-life application.           
\end{itemize}

\section{Requirements} \label{requirements}

The first two requirements are not linked to security: 
\begin{itemize}

\item Although  tokens can enjoy a variety of formats, the most convenient is probably the same format of the PAN itself,
since in this case a token can move inside a payment network and also be used as a payment token
(for a definition of a \emph{payment token} see p. 13 in \cite{emv14}).
But if we try to create a token by a direct encryption of the PAN we will meet several problems, since  the output of the encryption may not have a format like a PAN.
Indeed, we must keep in mind that PCI requests the use of standard encryption algorithms and so we cannot create ad-hoc cryptographic primitives, but we must rely on established algorithms. \\
So we must face the problem of designing algorithms that preserve the message format, or the so-called \emph{Format Preserving Encryption} (\emph{FPE}) \cite{Bellare2009}.
In literature, there are some interesting examples of algorithms that solve such problem \cite{bellareffx,brier2010bps,Hoang2012,Morris2009,stefanov2012fastprp}.

\item It must be possible to obtain different tokens from a single PAN (even one per transaction, if necessary),
      so the tokenization algorithm will require additional inputs, such as, a transaction counter, an expiration date, etc. .

\end{itemize}

Concerning security issues, there are many security requirements that our algorithm has to satisfy in order to
meet PCI compliance.
We list here the main requests:
\begin{enumerate}
\item[A1] ``\emph{the recovery of the original PAN must not be computationally feasible knowing only the token or a number of tokens.}" (p. 6 in \cite{pcitoken11}).\\
In other words, even if an attacker has managed to collect many tokens, all coming from the same PAN, possibly
even on a long period of time, they must be computationally unable to retrieve the corresponding PAN.
This is a form of ciphertext-only attack.

\item[A2] ``\emph{access to multiple token-to-PAN pairs should not allow the ability to predict or determine other PAN values from knowledge of only tokens.}" (p. 6 in \cite{pcitoken11} and GT4 in \cite{pcitoken15}). \\ This is a known-plaintext attack.

\item[A3] ``\emph{Tokens should have no value if compromised or stolen, and should be unusable to an attacker if a system storing only tokens is compromised}'' (p. 6 in \cite{pcitoken11}). \\
Since this sentence comes immediately after A1 and A2, which aim at preventing PAN recoveries,  we take the goal
of this rather cryptic sentence to be the prevention of unauthorized token generation. In other words, an attacker possessing many tokens (but not knowing the corresponding PAN's) must be unable to generate even \emph{one}  other valid token. This condition is drastically different from A1, because here we do not require the attacker to be able to deduce any of the involved PAN's.
However, there are two matters. First, since they needs to compute \emph{valid} tokens, they must  have control
on any other input of the tokenization algorithm (such as, a transaction counter). Second, they needs to know
for which PAN they can generate a token.  
       
\item[A4] ``\emph{Converting from a token produced under one cryptographic key to a token produced under another cryptographic key  should require an intermediate PAN state---i.e., invocation of de-tokenization.}"  (GT 11 in \cite{pcitoken15}).\\
Since our system uses a block cipher with additional (public) input, we take this to mean that if an attacker gets a token
obtained by a PAN, a secret key and an additional input, then they must be unable to compute any token corresponding 
to the same PAN (and same additional input) but to a different key.

\item[A*] ``\emph{The recovery of the original PAN should be computationally infeasible knowing only the token, 
       a number of tokens, or a number of PAN/token pairs}" (GT 5 in \cite{pcitoken15}).\\
       This is a repetition of A1 and A2.
       
\end{enumerate}

To prove that our tokenization algorithm satisfies A1, A2 and A3, we will prove in Section \ref{proof} that it satisfies an even stronger condition, under the assumption of the strength of the core primitive we are using to define it (a block cipher). \\
Requirement A4 requires a separate proof in Section \ref{proof}.

\section{Algorithm} \label{alg}

A card number is formed by three concatenated parts: the IIN (also called BIN), that identifies the card Issuer, a numeric code, that
identifies the account, and a check digit. 
We assume to replace the IIN with another fixed code (called a ``token BIN" in \cite{emv14}, p. 14), which marks the resulting card number as a token, so we will ignore the first part in the description of our algorithm. 
We will also compute the check digit as appropriate, so we can discard it too in the forthcoming formal description of our algorithm.

\indent We assume to be able to invoke the encryption function of a block cipher $E$, just by sending a plaintext and obtaining a ciphertext, with a key that is kept somewhere protected and that we do not need to know.  
We can think of $E$ as the first \emph{ingredient} of our algorithm.
$\mathbb{K}$ denotes the keyspace of $E$ and $K\in\mathbb{K}$ will be any key, so that we can view $E$ as a function $E: \mathbb{K} \times (\mathbb{F}_2)^{m} \longrightarrow (\mathbb{F}_2)^{m}$ for some $m\in \mathbb{N}$.
With $K$ fixed, $E$ is a permutation acting on the set $(\mathbb{F}_2)^{m}$ of the $m$-bit strings. With standard block ciphers we have $m=64$  or $m=128$. We assume as usual that the set of its encryption functions forms a random sample of the set of permutations acting on $(\mathbb{F}_2)^{m}$.\\
For strings we use a notation like $|0110|_2$, where the index $2$ denotes that we are
using only symbols from $\{0,1\}$, i.e., remainders of division by $2$.

\indent Our algorithm processes two inputs: a numeric code coming from the PAN and an additional input.
\begin{itemize}

\item By \emph{numeric code} we mean a string of $\ell$ decimal digits, $\ell$ any agreed number $\ell\in \mathbb{N}$, and we formally define the set of our numeric codes as $\mathbb{P}:=\{0,1,\ldots,9\}^\ell$  At present, $13 \leq\ell\leq 19$ for numeric codes coming from PANs \cite{emv14}, but we do not need to impose any limitation.
We observe that $\mathbb{P}$ is in obvious bijection with the integer set $\{a \in \mathbb{N} \mid 0\leq a < 10^\ell\}$,
and so we can view a numeric code also as a non-negative integer, but care has to be
taken to pass from one representation to the other. Let $[y]_b^s$ denote the representation (string) of 
a positive integer $y < b^s$ in base $b$ with $s$ digits, where the most significant digits are on the \emph{left}.\\
For example, $\quad[12]_{10}^{2}=|12|_{10}$, $[12]_{10}^{3}=|012|_{10}$ and $[13]_2^5=|01101|_2$. \\
Then any positive integer $X$ such that $X <10^\ell$  can be easily converted to $[X]_{10}^\ell \in \mathbb{P}$. We will use a bar to denote the conversion from a string to a number,
like $\overline{|12|_{10}}=12$ and $\overline{|01100|_2}=12$.

\item The role of the additional input is to allow for different tokens corresponding to the same PAN
      and so it can be anything, such as a transaction counter or an integer denoting the current time. 
      Formally, we will identify it as a binary string of finite but arbitrary length (as for example the binarization of a transaction counter). We will call $ \mathbb{U}$ the set containing all these strings, so that any $u\in  \mathbb{U}$ is implicitly meant to be an additional input.
\end{itemize}

\indent Let $n:=\lceil \log_2(10^\ell)\rceil$ be the maximum number of bits required to represent a number with $\ell$ decimal digits. Since most of the block ciphers used in real life applications have a block-size of at least 64 bits, and for the maximum length of a PAN $\ell = 19$ we have $n = 64$, we assume that $\ell$ is such that $n \le m$.\\
Now we can present $f$, the second \emph{ingredient} of our algorithm, which is a public function
 $$
    f:\mathbb{U} \times \mathbb{P} \longrightarrow (\mathbb{F}_2)^{m-n}
 $$ 
 In other words, given an additional input $u\in\mathbb{U}$  
 and a numeric code $X\in\mathbb{P}$ coming from the PAN , $f$ returns a string of $m - n$ bits.\\
 We require $f$ to be \emph{collision-resistant}, that is, it must be computationally
infeasible to obtain two distinct pairs $(X_1,u_1)$ and $(X_2,u_2)$ such that
$f(u_1,X_1)=f(u_2,X_2)$.
This requirement compels the image space to have a size large enough to prevent brute-force collision attacks.
So in the case of PAN tokenization we need to consider only block ciphers with block size of at least 128 bits, in order to have an image space of dimension at least 64 bits.
The purposes of this function $f$ are the followings:
\begin{itemize}
\item to pad the input of the tokenization algorithm to match the block size of the cipher;
\item to allow the creation of multiple different tokens from the same PAN using the same key, useful for example to change token for each transaction.
\end{itemize}
The output of $f$ could be seen as a \emph{tweak} in the context of Tweakable Encryption \cite{liskov2002tweakable}.
An example for this $f$ could be a truncated version of a cryptographic hash function.

\indent The third \emph{ingredient} for our tokenization algorithm is a database stored somewhere
securely that contains a look-up table of PAN-token pairs. Once we have generated a token, we assume it is
inserted in the table. However, to generate a new token, we need to access the database only via a function $\mathrm{check}$  that checks if the token is already stored and returns either $\T$ or $\F$ accordingly. 

One of the goals of a tokenization algorithm is to obtain an integer with $\ell$ decimal digits starting from another integer with the same length.
Since $n:=\lceil \log_2(10^\ell)\rceil < m$, we have to consider only a fraction of the output of $E$, in particular we will take the $n$ least significant bits of the output, and then convert this string back to an integer.
Given that $10^l < 2^n$, this integer could have $\ell + 1$ decimal digits.
To solve this problem we use a method known as \emph{Cycle Walking Cipher} \cite{black2002ciphers}, designed to encrypt messages from a space $\mathcal{M}$ using a block cipher that acts on a space $\mathcal{M'} \supset \mathcal{M}$, and obtain ciphertexts that are in $\mathcal{M}$.

\indent We are ready to write down our algorithm.\\
The \emph{Tokenization Algorithm} $T(K, X, u)$ executes the following steps:
\begin{enumerate}
\item $t:=f(u,X)\mid\mid \left[\bar{X}\right]_2^n$
\item $c:= E(K, t)$
\item if $(\bar{c} \mod 2^n) \geq 10^\ell$, then $t:=c$ and go back to step 2
\item $\texttt{token} := [ \bar{c} \mod 2^n]_{10}^\ell$
\item if ${\mathrm{check}}(\texttt{token}) = \T$, then $u:= u + 1$ and go back to step 1
\item return $\texttt{token}$
\end{enumerate}

The correctness of Algoritm $T$ is obvious, we now discuss its termination.

At Step 3 we check if $(\bar{c} \mod 2^n) \geq 10^\ell$. Since $x \mapsto E(K,x)$ is a random permutation,
we expect $c$ to resemble a random binary string in $(\mathbb{F}_2)^m$. Therefore, the number
$(\bar{c} \mod 2^n)$ is a random integer in $\{0,\ldots,2^n-1\}$. 
Recall that $2^{n-1} < 10^\ell < 2^n$. Therefore, the condition at Step 3 is met with probability $0 < p=\frac{2^n-10^\ell}{2^n} <1$. Going back to step 2 another pseudo-random
number is computed and the probability that the condition of Step 3 is not satisfied for it is again $p$.
Since the two events can be considered independent, due to the pseudorandomness property
of $E$, the probability of the joint event goes down to $p^2$, and so on.
Thefeore, the probability that the algorithm remains stuck at Step 3 is negligible.

At Step 5 we check if we already have the new token in our database . 
If we have it, we increase $u$ to $u+1$. We remain stuck only if
$f(u,X)=f(u+1,X)$, but this happens very rarely thanks to the collision resistance of $f$.

For a more detailed discussion of the probability to meet the conditions at Step 3 and at Step 5, see the instantiation of 
our algorithm given in Section \ref{sec:con}.

\section{Proof of Security}\label{proof}

In this section we will prove that the algorithm previously defined is secure in an Indistinguishability under a Chosen-Plaintext Attack scenario, under the condition that its core encrypting algorithm is secure in the same scenario. This will guarantee in particular A1, A2 and A3.
Then we will prove A4 separately.

\begin{definition}[IND-CPA] \label{cpa1}
    Let $E(K, m)\rightarrow c$ be an encrypting function.
    An \emph{Indistiguishability under Chosen Plaintext Attack} (IND-CPA) game for $E$ between an adversary $\A$ and a challenger $\C$ proceeds as follows: 
    \begin{description}
        \item[Phase I] $\A$ chooses a plaintext $m_i$ and sends it to $\C$, that responds with $c_i = E(K, m_i)$. This phase is repeated a polynomial number of times.
        
        \item[Challenge] $\A$ chooses two plaintexts $m^*_0, m^*_1$ (never chosen in Phase I) and sends them to $\C$ that selects $\nu \in \{0, 1\}$ at random and computes $c= E(K, m^*_{\nu})$. Then $\C$ sends $c$ to $\A$.
        \item[Phase II] Phase I is repeated, with the obvious restriction that $\A$ cannot choose $m^*_0$ or $m^*_1$.
        \item[Guess] $\A$ guesses $\nu' \in \{0, 1\}$. They wins if $\nu' = \nu$.
    \end{description}
    
    We say that the advantage $Adv_{\A}^E$ of $\A$ winning the IND-CPA game for the Encrypting algorithm $E$ is:
    \begin{equation*}
    Adv_{\A}^E= \left| Pr\left[\nu' = \nu \right] - \frac{1}{2}\right|
    \end{equation*}
    
    $E$ is said to be \emph{secure in a IND-CPA scenario} if there is no probabilistic polynomial-time algorithm $\A$ that wins the CPA game with more than negligible advantage.
\end{definition}

    We define a IND-CPA game for $T$ analogously to \ref{cpa1}, where messages are  replaced by numeric codes and additional inputs, while ciphertexts are replaced by tokens.
    So the adversary chooses two code/additional input pairs and tries to distinguish which of these  corresponds to the token returned by $\C$.
    The adversary is also able to request other tokens (corresponding to a polynomial number of pairs), that she may choose adaptively.\\
    \vspace{2mm}
\begin{definition}[IND-CPA for a Tokenization Algorithm] \label{cpa2}
   Let $T(K, X, u)\rightarrow \texttt{token}$ be a tokenization algorithm that takes as input a key $K$, a numeric code $X\in \mathbb{P}$  and an additional input $u\in \mathbb{U}$, and returns 
   a token $\texttt{token}$.
    An \emph{Indistiguishability under Chosen Plaintext Attack} (IND-CPA) game for $T$ between an adversary $\A$ and a challenger $\C$ proceeds as follows: 
    \begin{description}
        \item[Phase I] $\A$ chooses $(X_i, u_i)$ and sends it to $\C$, that responds with $\texttt{token}_i = T(K, X_i,  u_i)$. This phase is repeated a polynomial number of times.
        \item[Challenge] $\A$ chooses $(X^*_0, u^*_0), (X^*_1, u^*_1)$ (with $(X^*_0,u^*_0)$ and $(X^*_1,u^*_1)$ never chosen in Phase I) and sends them to $\C$ that selects $\nu \in \{0, 1\}$ at random and computes $\texttt{token}= T(K, X^*_{\nu}, u^*_{\nu})$, which sends to $\A$.
        \item[Phase II] Phase I is repeated, with the obvious restriction that $\A$ cannot choose $(X^*_0,u^*_0)$ or  $(X^*_1,u^*_1)$.
        \item[Guess] $\A$ guesses $\nu' \in \{0, 1\}$, she wins if $\nu' = \nu$.
    \end{description}
    
     We say that the advantage $Adv_{\A}^T$ of $\A$ winning the IND-CPA  game for the Tokenization Algorithm $T$ is:
    \begin{equation*}
    Adv_{\A}^T= \left| Pr\left[\nu' = \nu \right] - \frac{1}{2}\right|
    \end{equation*}
    
    $T$ is said to be secure in a IND-CPA scenario if there is no probabilistic polynomial-time algorithm $\A$ that wins the IND-CPA game with more than negligible advantage.
\end{definition}
\vspace{2mm}
\begin{theorem}[IND-CPA Security of Tokenization Algorithm]\label{thcpa2}
    Let $T$ be the Tokenization Algorithm described in Section~\ref{alg}, and let $E$ be 
    the block cipher used in Step 2 of $T$.
    If $E$ is secure in an IND-CPA scenario then $T$ is secure in an IND-CPA scenario.
\end{theorem}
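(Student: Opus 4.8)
The plan is to prove the contrapositive by a standard reduction: assuming an adversary $\A$ wins the IND-CPA game for $T$ with non-negligible advantage, I would construct an adversary $\B$ that wins the IND-CPA game for the block cipher $E$ with essentially the same advantage, contradicting the hypothesis on $E$. The adversary $\B$ acts simultaneously as the challenger $\C$ in the tokenization game it presents to $\A$ and as an adversary against its own challenger for $E$; crucially, $\B$ maintains its own copy of the look-up table so that it can emulate the $\mathrm{check}$ function exactly as the real $T$ would, using nothing beyond oracle access to $E$.

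First I would specify how $\B$ answers a token query $(X_i, u_i)$ made by $\A$ in Phase I or Phase II. Following the steps of $T$, $\B$ forms the plaintext $t := f(u_i, X_i) \mid\mid [\overline{X_i}]_2^n$, forwards $t$ to its own encryption oracle to obtain $c = E(K, t)$, and then performs the cycle-walking loop of Step 3 by repeatedly querying the oracle on the current ciphertext until $(\bar{c} \mod 2^n) < 10^\ell$. It then computes the candidate token, consults its simulated database exactly as in Step 5, incrementing $u$ and restarting when a collision is detected, and finally stores and returns the token. Since $\B$ reproduces every step of $T$ verbatim, the tokens it hands back to $\A$ are distributed identically to those of a genuine tokenization game.

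For the challenge, when $\A$ submits $(X^*_0, u^*_0)$ and $(X^*_1, u^*_1)$, I would have $\B$ form the two plaintexts $t^*_b := f(u^*_b, X^*_b) \mid\mid [\overline{X^*_b}]_2^n$ for $b \in \{0,1\}$ and forward them as its own challenge pair. Its challenger returns $c = E(K, t^*_\nu)$ for an unknown bit $\nu$, and $\B$ runs the cycle-walking and database steps on $c$ to produce the challenge token, which it passes to $\A$. Finally $\B$ outputs the same bit $\nu'$ that $\A$ outputs. As the bit $\nu$ chosen by the $E$-challenger determines which pair is tokenized, $\B$ wins precisely when $\A$ does, so that $Adv_{\B}^E = Adv_{\A}^T$ up to the error terms analysed below.

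The main obstacle will be checking that $\B$'s simulation is \emph{legitimate}, i.e.\ that it never violates the Phase II restriction of its own game by querying the forbidden plaintexts $t^*_0$ or $t^*_1$. The game rules already forbid $\A$ from re-querying $(X^*_0,u^*_0)$ or $(X^*_1,u^*_1)$, and the map $(X,u)\mapsto t$ is injective up to collisions of $f$ (the component $[\overline{X}]_2^n$ pins down $X$, and the collision resistance of $f$ pins down the pair), so no direct Phase II query reproduces a challenge plaintext. The remaining dangers are all negligible bad events: an intermediate ciphertext arising in some cycle-walking loop, or the challenge ciphertext $c$ re-entering its own loop, could coincide with $t^*_0$ or $t^*_1$; and a database collision at challenge time would force an increment of $u$ that $\B$ cannot carry out faithfully, since it does not know $\nu$. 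Because $E$ behaves as a random permutation and $f$ is collision-resistant, each such coincidence occurs with negligible probability, so $\B$ may simply abort and guess at random in those cases. I would likewise invoke these two properties to bound the negligible probability that the cycle-walking or database loops fail to terminate within a polynomial number of oracle calls, keeping $\B$ probabilistic polynomial-time. Collecting these negligible terms yields $Adv_{\B}^E \ge Adv_{\A}^T - \mathrm{negl}$, which forces $Adv_{\A}^T$ to be negligible and completes the reduction.
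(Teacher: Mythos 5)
Your proposal is correct and follows essentially the same route as the paper's own proof: a simulator that answers $\A$'s token queries by forwarding $t_i=f(u_i,X_i)\mid\mid[\overline{X_i}]_2^n$ to the $E$-oracle, performs the cycle-walking loop via repeated oracle queries, forwards the two challenge plaintexts as its own challenge pair, and relays $\A$'s guess, with the same treatment of the problematic events (re-encryption queries colliding with challenge plaintexts, and a $\mathrm{check}$ collision at challenge time handled by guessing randomly). The only cosmetic difference is bookkeeping: you maintain a full simulated database and fold the challenge-time collision into a negligible loss, whereas the paper retains it as an explicit non-negligible factor $\rho_{q_1}$ in the advantage $\epsilon'=\rho_{q_1}\epsilon$; both yield the same conclusion.
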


\begin{proof}
    Let $\C$ be the challenger in the IND-CPA game for $E$, and $\A$ be an algorithm that can win the IND-CPA game for $T$ with more than negligible advantage $\epsilon$.
    We will build a simulator $\Si$ that plays the IND-CPA game for $E$ by simulating an IND-CPA game for $T$ and interacting with $\A$.
        
    For  Phase I and II we need to show how $\Si$ answers to tokenization queries $(X_i, u_i)$ made by $\A$.
    The function $f$ is publicly known, so $\Si$ may compute $t_i:=f(u_i,X_i)\mid\mid \left[\overline{X_i}\right]_2^n$ and queries $\C$ for the encryption of $t_i$, obtaining $c_i$ in response.
    If $(\overline{c}_i \mod 2^n) \geq 10^l$, then $\Si$ queries again $\C$, this time requiring the encryption of $c_i$, repeating this passage until $\C$ answers with $c_i$ such that $(\overline{c}_i \mod 2^n) < 10^\ell$
    (which will eventually happen
    since $c \mapsto E(K,c)$ is a random permutation).
    At this point $\Si$ answers to the query of $\A$ with $\texttt{token}_i :=  [ \overline{c}_i \mod 2^n]_{10}^l$. \\
Observe that the $c_i$'s, and hence the $\texttt{token}_i$'s, will be distinct with high probability, since $f$ is collision-resistant, even if the $X_i$'s are identical, as long as the pairs $(X_i,u_i)$'s are distinct.
    
    \indent In the challenge phase, $\Si$ receives from $\A$ two pairs $(X^*_0, u^*_0), (X^*_1, u^*_1)$.
    $\Si$ computes $t^*_j:=f(u^*_{j},X^*_{j})\mid\mid [\overline{X^*}_j]_2^n$ for $j \in \{0,1\}$ and submits them to $\C$ for the challenge phase of the IND-CPA game for $E$.
    $\C$ chooses at random $\nu \in \{0, 1\}$ and will respond with the challenge ciphertext $c$.
    If $(\overline{c} \mod 2^n) \geq 10^l$, $\Si$ queries  $\C$ requiring the encryption of $c$, repeating this passage until we have ${\texttt{token}}:= (\overline{c} \mod 2^n) < 10^\ell$ (which will eventually happen
    since $c \mapsto E(K,c)$ is a random permutation).
    At this point $\Si$ sends ${\texttt{token}}$ to $\A$ as the challenge token of the IND-CPA game for $T$.

    Eventually $\A$ will send to $\Si$ its guess $\nu'$ for which code has been tokenized into $\texttt{token}$.
    $\Si$ then forwards this guess to $\C$.
     It is clear that $\A$ guesses correctly if and only if $\Si$ guesses correctly since the simulation is seamless.
     
     Note that during the Challenge phase $\Si$ is not allowed to send to $\C$ messages submitted during Phase I, and in Phase II $\Si$ is not allowed to send to $\C$ the two messages submitted in the Challenge phase.
     Since the same restriction applies to the interaction between $\A$ and $\Si$, problems may arise only when $\Si$ queries for the re-encryption of ciphertexts to meet the condition $(\overline{c}_i \mod 2^n) < 10^\ell$.
     However the number of queries is polynomial and the encryption function $c \mapsto E(K,c)$ is a random permutation, so  the probability of such a collision is negligible.

     Finally, throughout the simulation Step 5 of the algorithm has been ignored, always supposing that the $\mathrm{check}$ function outputs $\F$ (to be coherent the $\mathrm{check}$ function  has to only check  the simulated tokens already generated by $\Si$).
     
     In phases I and II, when ${\mathrm{check}}({\texttt{token}}_i) = \T$, $\Si$ is able to follow the algorithm properly adjusting $u_i$ and querying $\C$.
     
     For the challenge phase instead we have to hope that ${\mathrm{check}}({\texttt{token}}) = \F$, which happens with non-negligible probability $\rho_{q_1}$ (this probability depends on the number of queries $q_1$ made in Phase I).
     When ${\mathrm{check}}({\texttt{token}}) = \T$, we cannot simulate correctly, so $\Si$ may directly guess randomly, and so the probability of guessing is $\frac{1}{2}$ in this case.
     Thus the advantage $\epsilon'$ of $\Si$ is:
     \begin{equation*}
        \epsilon' = \left|\left(\rho_{q_1} \left(\frac{1}{2} + \epsilon\right) + (1 - \rho_{q_1}) \frac{1}{2} \right) - \frac{1}{2}\right|
     \end{equation*}
     which is non-negligible since $\epsilon$ and $\rho_{q_1}$ are non-negligible.
     Thus $\Si$ has a non-negligible advantage winning the IND-CPA game for $E$.
\end{proof}

It is well-known that an encryption algorithm which is IND-CPA secure then it is secure against a \emph{known-plaintext attack}, and even more so against a \emph{ciphertext-only} attack.
We now show similar results for our tokenization algorithm by proving some of our claimed requirements.\\
\vspace{2mm}
\begin{corollary}[A1-A2-A3]\label{corcpa2}
    Let $T$ be the Tokenization Algorithm described in Section~\ref{alg}, and let $E$ be 
    the block cipher used in Step 2 of $T$.
    If $E$ is secure in an IND-CPA scenario then $T$ satisfies A1, A2 and A3.
\end{corollary}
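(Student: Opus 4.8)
The plan is to derive all three requirements from Theorem~\ref{thcpa2}, which already establishes that $T$ inherits IND-CPA security from $E$. Requirements A1 and A2 follow from the standard implication chain recalled just before the statement: an IND-CPA secure scheme is secure against a known-plaintext attack, and a fortiori against a ciphertext-only attack. The only requirement demanding a genuinely separate argument is A3, because IND-CPA is a confidentiality notion and does not, in general, entail the unforgeability flavor that A3 asks for; there I will have to exploit the specific structure of the token space.

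For A1 I would argue by contraposition. Suppose an adversary $\B$ recovers a PAN from one or more tokens with non-negligible probability. I build an IND-CPA distinguisher $\A$ against $T$: in the Challenge phase $\A$ submits two pairs $(X^*_0,u^*_0)$ and $(X^*_1,u^*_1)$ with $X^*_0\neq X^*_1$, receives the challenge token $\texttt{token}$, and feeds it to $\B$. If $\B$ returns $X^*_0$ then $\A$ guesses $\nu'=0$, if it returns $X^*_1$ then $\nu'=1$, and otherwise it flips a coin. Since $\B$ outputs the correct PAN with non-negligible probability, $\A$ wins with non-negligible advantage, contradicting Theorem~\ref{thcpa2}. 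For A2 the reduction is identical, except that $\A$ first supplies $\B$ with the token-to-PAN pairs it expects by issuing the corresponding Phase~I queries to its own challenger; thus a known-plaintext PAN-recovery adversary is simply an IND-CPA adversary that uses its query oracle.

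For A3 the goal is to show that an adversary who controls the auxiliary input $u$ and knows a target PAN $X$, but has never obtained the token for that exact pair, cannot output the valid token $T(K,X,u)$ except with negligible probability. I would again reduce to the IND-CPA security of $T$. Suppose a forger commits to a prediction $\hat\tau$ for a fresh pair $(X_0,u_0)$ and succeeds, i.e.\ $\hat\tau=T(K,X_0,u_0)$, with probability $p$. I construct a distinguisher that submits $(X_0,u_0)$ and an independent $(X_1,u_1)$ in the Challenge phase, receives $\texttt{token}$, outputs $\nu'=0$ when $\texttt{token}=\hat\tau$ and a uniformly random bit otherwise. When $\nu=0$ the equality holds with probability $p$; when $\nu=1$ the challenge token is (computationally) an independent, near-uniform element of the $10^\ell$-element format space, so the equality holds only with probability about $10^{-\ell}$. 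A short computation then gives advantage roughly $\tfrac{p}{4}-\tfrac{1}{4\cdot 10^\ell}$; by Theorem~\ref{thcpa2} this must be negligible, which forces $p$ to be negligible.

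The main obstacle is exactly this last reduction. Two points need care. First, unlike A1 and A2, A3 is not a direct instance of the ciphertext-only/known-plaintext chain: I must manufacture a distinguisher out of a predictor, and the resulting bound degrades by the probability $10^{-\ell}$ that a uniformly random token hits a fixed value. Second, I must justify that the challenge token for $(X_1,u_1)$ really is computationally independent of, and near-uniform with respect to, the committed value $\hat\tau$; this rests on modelling $x\mapsto E(K,x)$ as a random permutation together with the collision resistance of $f$, exactly as in the proof of Theorem~\ref{thcpa2}. Since $13\leq\ell$ in practice, the term $10^{-\ell}$ is small enough to be treated as negligible, and A3 follows.
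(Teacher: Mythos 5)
Your overall strategy is the paper's own: all three requirements are obtained by turning a hypothetical A1/A2/A3 attacker into a distinguisher for the IND-CPA game of $T$ established in Theorem~\ref{thcpa2}. Your A2 reduction matches the paper's exactly (Phase~I queries on random pairs manufacture the token-to-PAN pairs, the challenge token is handed to the recovery algorithm, and its answer in $\{X^*_0,X^*_1\}$ decides $\nu'$). Your A3 reduction is also structurally the paper's — there the forger outputs $(X,\texttt{token}^*)$ for a chosen $u^*$, the adversary submits $(X,u^*)$ against a random pair and tests equality — except that the paper simply asserts that ``the adversary easily wins the game just by checking $\texttt{token}=\texttt{token}^*$'', whereas you make explicit the false-match probability $\approx 10^{-\ell}$ under $\nu=1$, the resulting advantage $(p-10^{-\ell})/4$, and the random-permutation/collision-resistance modelling this rests on. On that point your write-up is more careful than the published argument.

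There is, however, one concrete incompleteness. Your A1 reduction feeds the recovery algorithm only the single challenge token. But A1, as the paper interprets it, concerns an attacker who has collected \emph{many} tokens all derived from the same PAN; such a recovery algorithm may legitimately require $N>1$ same-PAN tokens and be useless on one token alone, in which case your distinguisher extracts nothing from it. The paper's proof makes exactly the move you omit: in Phase~I it queries $N-1$ pairs $(X,u_i)$ with the \emph{same} $X$, sets $X^*_0=X$ in the Challenge, and feeds all $N$ tokens (the $N-1$ Phase~I tokens plus the challenge token) to the algorithm; the collection is all-same-PAN iff $\nu=0$, so the algorithm's output ($X$ versus $\F$) decides the bit. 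A milder version of the same remark applies to A3: the paper's forger is entitled to a polynomial number of tokens of unknown PANs before producing its forgery, so your reduction should supply these via Phase~I queries (as you already do for A2) before the forger commits to $\hat\tau$. Both fixes are routine — the idea of manufacturing the attacker's expected input through Phase~I queries is already present in your A2 argument — but as written your A1 establishes only a strictly weaker, single-token version of the requirement.
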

\begin{proof}
\begin{itemize}
\item A1\\
      If $T$ does not satisfy A1 then an attacker $\A$ can obtain a PAN from a token
      once she gets enough tokens corresponding to the same PAN. So we can suppose that  $\A$ has access to an algorithm that takes in input a    
      polynomial number $N$ of $\texttt{token}$s and with a non-negligible probability outputs
      \begin{itemize}
        \item $X$, if all the $N$ $\texttt{token}$s correspond to $X$;
        \item $\F$, otherwise.
      \end{itemize}
     \indent $\A$ tries the IND-CPA game for $T$ and
      chooses $N-1$ pairs $(X,u_i)$, with the same $X$,  and sends them to $\C$, who responds with
        $\texttt{token}_i = T(K,X,u_i)$.\\
        
        Then $\A$ passes to the  Challenge Phase and chooses  $u^*_0$ and  $(X^*_1,u^*_1)$, with $X^*_1\ne X$ and sends $(X^*_0,u^*_0)$, with $X^*_0=X$, and 
        $(X^*_1,u^*_1)$ to $\C$ that selects $\nu\in \{0, 1\}$ at random and computes $\texttt{token}= T(K,X^*_{\nu}, u^*_{\nu})$. Then
        $\C$ sends $\texttt{token}$ to $\A$.\\
        
        Then $\A$ runs her algorithm passing as inputs $\{\texttt{token}_i\}_{1\leq i\leq N-1}$ and 
        $\texttt{token}$, obtaining with non-negligible probability either 
   $X_0^*=X$ or $\F$ and so she knows for sure and wins the IND-CPA game.

\item A2\\
If A2 does not hold, then $\mathcal{A}$ has access to an algorithm that takes two inputs, a polynomial number $N$ of token-to-PAN pairs and a token, and returns the PAN corresponding to the token with non-negligible probability.
Thanks to our convention in Section 3, we can assume the algorithm inputs
to be actually $N$ pairs of type $(X_i,{\texttt{token}}_i)$ 
(plus the single token ${\texttt{token}}^*$)
and the algorithm output to be the $X^*\in \mathbb{P}$ corresponding to 
$\texttt{token}^*$.

\indent The adversary tries the IND-CPA game for $T$ by choosing $N$ random  $(X_i,u_i)$'s 
and obtaining from $C$ their corresponding tokens $\texttt{token}_i$'s. 
\\
Then she passes to the Challenge Phase and sends two random pairs: $(X_0^*,u_0^*)$ and $(X_1^*,u_1^*)$. The Challenger returns the token $\texttt{token}$ corresponding to one of these. \\
Then $\mathcal{A}$ runs her algorithm passing as inputs the pairs $(X_i,\texttt{token}_i)$'s and $\texttt{token}$, obtaining with non-negligible probability the
correct $X\in \{X_0^*,X_1^*\}$, winning thus the IND-CPA game.  
 
\item A3\\
If A3 does not hold, then we can assume that $\mathcal{A}$ has access to an algorithm that takes as input a polynomial number $N$ of tokens and one additional input $u^*$, and that returns a pair $(X,\texttt{token}^*)$,
where $\texttt{token}^*$ is the token corresponding to $(X,u^*)$ with non-negligible probability.

\indent The adversary tries the IND-CPA game for $T$ by choosing $N$ random  $(X_i,u_i)$'s 
and obtaining from $C$ their corresponding tokens $\texttt{token}_i$'s. Then she chooses a $u^*\in \mathbb{U}$
and passes it along with these tokens to her algorithm, obtaining $X$ and $\texttt{token}^*$.
\\
Then she passes to the Challenge Phase and sends  pairs: $(X,u^*)$ and a random pair. The Challenger returns the token $\texttt{token}$ corresponding to one of these. \\
The adversary easily wins the game just by checking if $\texttt{token} = \texttt{token}^*$.
 
\end{itemize}
\end{proof}

\begin{theorem}[A4 for Tokenization Algorithm]\label{tha4}
Let $K,K^*\in \mathbb{K}$, $X\in \mathbb{P}$ and $u\in \mathbb{U}$.
If an attacker knows only $u$ and the token $\texttt{token}=T(K,X,u)$, then she is able to compute $\texttt{token}^*=T(K^*,X,u)$ only with negligible probability.
\end{theorem}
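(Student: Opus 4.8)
The plan is to exploit the fact that the string fed to the block cipher in Step~1 of $T$ does not depend on the key. First I would note that $t:=f(u,X)\mid\mid[\bar X]_2^n$ is a function of $(X,u)$ only, so the \emph{same} $t$ is encrypted in both computations: the token $\texttt{token}=T(K,X,u)$ is obtained by running the cycle-walking-and-truncate procedure of Steps~2--4 with the permutation $E(K,\cdot)$ on input $t$, while $\texttt{token}^*=T(K^*,X,u)$ runs the identical procedure with $E(K^*,\cdot)$ on the same $t$. Writing $\Phi_K$ and $\Phi_{K^*}$ for the two resulting key-indexed maps from $(\mathbb{F}_2)^m$ to the token space, the claim reduces to bounding the probability that an algorithm seeing $(u,\Phi_K(t))$ outputs $\Phi_{K^*}(t)$.

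Next I would invoke the standing assumption of Section~\ref{alg} that the encryption functions of $E$ form a random sample of the permutations of $(\mathbb{F}_2)^m$. Since $K\neq K^*$, this lets me treat $E(K,\cdot)$ and $E(K^*,\cdot)$ as two \emph{independent} uniformly random permutations. The attacker's whole view $(u,\texttt{token})$ is a function of $E(K,\cdot)$ and of the fixed (unknown) $X$, and is therefore independent of $E(K^*,\cdot)$. Hence, conditioning on any fixed realization of $E(K,\cdot)$---which already fixes both the view and the attacker's resulting guess $g$---the target $\texttt{token}^*=\Phi_{K^*}(t)$ still depends only on the untouched, uniformly random permutation $E(K^*,\cdot)$.

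The last ingredient is that, for a fixed input $t$ and a uniformly random $E(K^*,\cdot)$, the output $\Phi_{K^*}(t)$ is essentially uniform over the token space $\{0,\ldots,10^\ell-1\}$: the low $n$ bits of $E(K^*,t)$ are uniform over $\{0,\ldots,2^n-1\}$, and the rejection at Step~3 conditions this on landing below $10^\ell$, leaving a token that is uniform up to negligible cycle-walking corrections. Combined with the independence above, this gives $\Pr[g=\texttt{token}^*]\approx 10^{-\ell}$ for every strategy $g$ computed from the view, which is negligible for PAN-sized $\ell$. I would add the remark that even granting the attacker the Phase~I--II oracle $T(K,\cdot,\cdot)$ would not help, since any such transcript is again a function of $E(K,\cdot)$ alone and thus still independent of $E(K^*,\cdot)$.

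The step I expect to be the main obstacle is ruling out the correlation through $X$: both $\texttt{token}$ and $\texttt{token}^*$ are deterministic functions of the same unknown PAN, so one must argue that whatever $\texttt{token}$ leaks about $X$ is useless for predicting $\texttt{token}^*$. The crux is that such leakage is worthless without the ability to evaluate $E(K^*,\cdot)$: after conditioning on the view, $E(K^*,\cdot)$ is still a fresh uniform permutation, so $\Phi_{K^*}(t)$ is uniform over the token space \emph{irrespective} of how much the view reveals about $X$. A secondary technical point is making the ``essentially uniform'' claim precise, but this is exactly the heuristic already used to justify the termination of Step~3 in Section~\ref{alg}.
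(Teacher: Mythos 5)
Your proposal is correct under the paper's idealized assumption, but it takes a genuinely different route: a direct conditional-uniformity argument instead of the paper's proof by contradiction. The paper argues contrapositively that, since (barring the rare Step-3 rejection) both tokens derive from encrypting the same string $M=f(u,X)\mid\mid[\bar X]_2^n$, an attacker computing $\texttt{token}^*$ from $\texttt{token}$ could ``deduce $E(K^*,M)$ from $E(K,M)$'', so the two encryption functions would be closely correlated, contradicting the standing assumption that the encryption functions of $E$ form a random sample of the permutations of $(\mathbb{F}_2)^m$. You read that same assumption as independence of $E(K,\cdot)$ and $E(K^*,\cdot)$ for $K\neq K^*$ (implicit in A4), condition on the attacker's view --- a function of $E(K,\cdot)$ and $X$ alone --- and conclude that $\texttt{token}^*$ remains essentially uniform over the $10^\ell$ possible tokens, so every guessing strategy succeeds with probability about $10^{-\ell}$. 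Your version buys three things: it is quantitative, giving an explicit bound rather than a qualitative ``correlation'' contradiction; it silently repairs a soft spot in the paper's argument, since the token is only a truncation of the final ciphertext, so computing $\texttt{token}^*$ does not literally yield $E(K^*,M)$ as the paper asserts, whereas your argument never needs that implication; and your closing remark that the bound survives oracle access to $T(K,\cdot,\cdot)$ is strictly stronger than the stated theorem. The paper's version buys brevity and makes the Step-3 cycle-walking caveat explicit, which in your write-up is folded into the ``essentially uniform'' estimate --- the first-hit-uniformity of a random permutation under cycle walking is indeed the one claim you would need to make precise, though it follows from the symmetry of a uniform permutation and matches the heuristic of Section~\ref{alg}. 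Note finally that $10^{-\ell}\approx 2^{-53}$ for $\ell=16$ is ``negligible'' only in the paper's informal fixed-parameter sense, but the paper's own conclusion has exactly the same character, and like the paper you harmlessly ignore Step~5.
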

\begin{proof}
The two tokens come directly from the encryption of the same string $M=f(u,X) \mid\mid [\overline{X}]^n_2$ with two different
keys, except when the unlikely condition in Step 3 is met. So, with non-negligible probability, $\A$ is able to compute $E(K^*,M)$ from $E(K,M)$. This means that for a large portion of the plaintext space the two encryption functions are closely correlated, since from one it is possible to deduce the other without the need for decryption/re-encryption contradicting our first assumption
on $E$, that is, that  the set of its encryption function forms a random sample of the set of permutations acting on $(\mathbb{F}_2)^m$.
\end{proof}

\begin{remark}
The requirement GT5 in \cite{pcitoken15} ``\emph{The recovery of the original PAN should be computationally infeasible knowing only the token, a number of tokens, or a number of PAN/token pairs}'' directly follows  from A1 and A2. 
\end{remark}

%\section{Coclusion} 
%authors show how AES-256 successfully passed NIST satistical tests \cite{nisttest} designed to verify
%\begin{itemize}
 %\item the absence of any detectable correlation between plaintext/ciphertext pairs;
 %\item the absence of any detectable deviations from randomness.
%\end{itemize}
%So we can state that AES-256 is IND-CPA secure.
 
%Note that thorough the proof the prefixes are supposed independent from the code to be tokenized, thus they might be selected at random from a set of valid prefixes and still obtain valid tokens.
%In the context of PAN tokenization, the prefix is the token BIN, that is in fact independent from the central digits of the PAN and only used to rout the token to the correct card issuer for detokenization.
%Also in our model BINs has not been considered since they are publicly available and it is not unreasonable to derive them from token BINs and vice-versa, without altering the probability of succeeding in a sensible manner.

\section{Conclusion}\label{sec:con}
Tokenization is a problem of practical interest, so we conclude giving an example of an instantiation with concrete cryptographic primitives and fixed length of the PAN, and we will analyze its efficiency and security.

Let us consider PANs with length $\ell=16$, so $n=54$, let us take AES-256 as the cipher $E$, and as the function $f$ we take SHA-256 truncated to $128-54=74$ bits.

\subsection{Security}
Given the results given in \cite{SHA15}, SHA-256 passed several statistical tests designed
to verify ``the absence of any detectable correlation between input and output and the
absence of any detectable bias due to single bit changes in the input string". Therefore, it can be considered collision-resistant and so it satisfies our purposes (see, for instance, Step 5 of our algorithm).\\ 
Moreoever, in \cite{nisttest}, the authors showed that AES-256 passed statistical tests designed to verify
the following properties:
\begin{itemize}
\item ``the absence of any detectable correlation between plaintext/ciphertext pairs
and the absence of any detectable bias due to single bit changes to a plaintext
block"; and
\item ``the absence of any detectable deviations from randomness".
\end{itemize}
Therefore, we can consider AES-256 IND-CPA secure and so it satisfies the hypothesis of  \ref{thcpa2} and  \ref{corcpa2}, and the randomness requirements of  \ref{tha4}.

\subsection{Efficiency}\label{subsec:eff}
The probability that the condition at Step 3 is met is
\begin{equation}\label{pstep3}
    p= \frac{2^{54} - 10^{16}}{2^{54}} \approx 0.445
\end{equation}
We have a geometric distribution, so the expected value of the number of iterations is:
$$
    \mathbb{E}_1 = \sum_{k=1}^{\infty} k p^{k-1} (1-p) \approx 1.801
$$
thus on average less than 2 executions of AES-256 are needed to get to Step 5.

To quantify the probability to met the condition of Step 5 we have to estimate the size of the database.
A very generous upper bound is $10^9$ PANs and $10^4$ token per PAN (generating a new token for every transaction) for a total of $10^{13}$ entries in the database.
Considering that there are $10^{16}-1$ possible tokens, the probability to met the condition is:
$$
    \rho = \frac{10^{13}}{10^{16}-1} \approx 0.001
$$
Again, the expected value of the number of iterations of the algorithm is:
$$
    \mathbb{E}_2 = \sum_{k=1}^{\infty} k \rho^{k-1} (1-\rho) \approx 1.001.
$$
thus very rarely more than one execution of SHA-256 is needed.

Finally the expected value of the total number of executions of AES-256 is:
$$
    \mathbb{E}_1 \mathbb{E}_2 \approx 1.803.
$$
thus on average less than 2 executions of AES-256 are needed to get to Step 6.

\subsection*{Acknowledgements}
The authors are indebted to several people for their suggestions:
Sandra D\'iaz, Patrick Harasser, Alessandro Tomasi and the anonymous referee. 
%\end{acknowledgements}

\bibliographystyle{plain}
\bibliography{biblio.bib%
}

\begin{thebibliography}{10}

\bibitem{Bellare2009}
Mihir Bellare, Thomas Ristenpart, Phillip Rogaway, and Till Stegers.
\newblock {F}ormat-{P}reserving {E}ncryption.
\newblock {\em Selected Areas in Cryptography -- SAC 2009}, LNCS 5867:295--312,
  2009.

\bibitem{bellareffx}
Mihir Bellare, Phillip Rogaway, and Terence Spies.
\newblock The {FFX} mode of operation for {F}ormat-{P}reserving {E}ncryption
  ({D}raft 1.1).
\newblock {\em Manuscript (standards proposal) submitted to NIST}, 2010.

\bibitem{black2002ciphers}
John Black and Phillip Rogaway.
\newblock Ciphers with {A}rbitrary {F}inite {D}omains.
\newblock volume LNCS 2271, pages 114--130. Springer, 2002.

\bibitem{brier2010bps}
Eric Brier, Thomas Peyrin, and Jacques Stern.
\newblock {BPS}: a {F}ormat-{P}reserving {E}ncryption proposal.
\newblock {\em Manuscript (standards proposal) submitted to NIST}, 2010.

\bibitem{Santiago2016}
Sandra D{\'i}az-Santiago, Lil~Mar{\'i}a Rodr{\'i}guez-Henr{\'i}quez, and Debrup
  Chakraborty.
\newblock A {C}ryptographic {S}tudy of {T}okenization {S}ystems.
\newblock {\em International Journal of Information Security}, 15(4):413--432,
  2016.

\bibitem{emv14}
EMVCo.
\newblock Payment {T}okenisation {S}pecification - {T}echnical {F}ramework,
  {V}ersion 1.0.
\newblock Technical report, March 2014.

\bibitem{Hoang2012}
Viet~Tung Hoang, Ben Morris, and Phillip Rogaway.
\newblock An {E}nciphering {S}cheme {B}ased on a {C}ard {S}huffle.
\newblock {\em Advances in Cryptology -- CRYPTO 2012}, LNCS 7417:1--13, 2012.

\bibitem{liskov2002tweakable}
Moses Liskov, Ronald~L Rivest, and David Wagner.
\newblock Tweakable {B}lock {C}iphers.
\newblock {\em Journal of Cryptology}, 24(3):588--613, 2011.

\bibitem{Morris2009}
Ben Morris, Phillip Rogaway, and Till Stegers.
\newblock {H}ow to {E}ncipher {M}essages on a {S}mall {D}omain.
\newblock {\em Advances in Cryptology -- CRYPTO 2009}, LNCS 5677:286--302,
  2009.

\bibitem{SHA15}
NIST.
\newblock Secure {H}ash {S}tandard ({SHS}).
\newblock {\em FIPS PUB 180-4}, 2015.

\bibitem{nisttest}
Andrew Rukhin and et~al.
\newblock {A} {S}tatistical {T}est {S}uite for the {V}alidation of {R}andom and
  {P}seudo {R}andom {N}umber {G}enerators for {C}ryptographic {A}pplications.
\newblock {\em NIST Special Publication}, 2010.

\bibitem{pcitoken11}
{PCI} {SSC}.
\newblock {I}nformation {S}upplement: {PCI DSS} {T}okenization {G}uidelines,
  {V}ersion 2.0.
\newblock Technical report, August 2011.

\bibitem{pcitoken15}
{PCI} {SSC}.
\newblock Tokenization {P}roduct {S}ecurity {G}uidelines - {I}rreversible and
  {R}eversible {T}okens, {V}ersion 1.0.
\newblock Technical report, April 2015.

\bibitem{pcidss}
{PCI} {SSC}.
\newblock {PCI DSS} {R}equirements and {S}ecurity {A}ssessment {P}rocedures,
  {V}ersion 3.2.
\newblock Technical report, April 2016.

\bibitem{stefanov2012fastprp}
Emil Stefanov and Elaine Shi.
\newblock {F}ast{PRP}: {F}ast {P}seudo-{R}andom {P}ermutations for {S}mall
  {D}omains.
\newblock {\em IACR Cryptology ePrint Archive}, 2012.

\end{thebibliography}

\end{document}